\theoremstyle{plain}
\newtheorem{thm}{\protect\theoremname}
  \theoremstyle{plain}
  \newtheorem{lem}[thm]{\protect\lemmaname}
\providecommand{\lemmaname}{Lemma}
\providecommand{\theoremname}{Theorem}
  \providecommand{\lemmaname}{Lemma}
\providecommand{\theoremname}{Theorem}
\begin{document}

\title{Rates of multi-partite entanglement transformations and 
applications in quantum networks}

\author{A. Streltsov}

\affiliation{Centre for Quantum Optical Technologies IRAU, Centre of New Technologies, University of Warsaw, Banacha 2c, 02-097 Warsaw, Poland}

\affiliation{Faculty of Applied Physics and Mathematics, Gda\'{n}sk University
of Technology, 80-233 Gda\'{n}sk, Poland}

\affiliation{National Quantum Information Centre in Gda\'{n}sk, 81-824 Sopot,
Poland}

\address{Dahlem Center for Complex Quantum Systems, Freie Universit\"at Berlin, 14195 Berlin, Germany}

\author{C. Meignant}

\address{Laboratoire d'Informatique de Paris 6, CNRS, Sorbonne Universit\'e, 4 place Jussieu, 75005 Paris, France}

\address{Dahlem Center for Complex Quantum Systems, Freie Universit\"at Berlin, 14195 Berlin, Germany}

\author{J. Eisert}

\address{Dahlem Center for Complex Quantum Systems, Freie Universit\"at Berlin, 14195 Berlin, Germany}
\begin{abstract}
The theory of the asymptotic manipulation of pure bipartite quantum
systems can be considered completely understood: The rates at which
bipartite entangled states can be asymptotically transformed into
each other are fully determined by a single number each, the respective
entanglement entropy. In the multi-partite setting, similar questions
of the optimally achievable rates of transforming one pure state into
another are notoriously open. This seems particularly unfortunate
in the light of the revived interest in such questions due to the
perspective of experimentally realizing multi-partite quantum networks.
In this work, we report substantial progress by deriving surprisingly
simple upper and lower bounds on the rates that can be achieved in
asymptotic multi-partite entanglement transformations. These bounds
are based on ideas of entanglement combing and state merging. We identify
cases where the bounds coincide and hence provide the exact rates.
As an example, we bound rates at which resource states for the cryptographic
scheme of quantum secret sharing can be distilled from arbitrary pure
tripartite quantum states, providing further scope for quantum internet applications
beyond point-to-point. 
\end{abstract}
\maketitle
Entanglement is the feature of quantum mechanics that renders it distinctly
different from a classical theory~\cite{Horodecki2009}. It is at the heart of quantum information
science and technology as a resource that is used to accomplish task
(and is increasingly also seen as an important concept in condensed-matter
physics). Given its significance in protocols of quantum information,
it hardly surprises that already early in the development of the field,
questions were asked how one form of entanglement could be transformed
into another. It was one of the early main results of the field of
quantum information theory to show that all pure bipartite states
could be asymptotically reversibly transformed to maximally entangled
states with local operations and classical communication (LOCC) at
a rate that is determined by a single number \cite{PureBipartiteBennett}:
the entanglement entropy, the von-Neumann entropy of each reduced
state. This insight makes the resource character of bipartite entanglement
most manifest: The entanglement content is given simply by its content
of maximally entangled states, and each form can be transformed reversibly
into another and back.

The situation in the multi-partite setting is significantly more intricate,
however \cite{MultipartiteReview,PhysRevA.63.012307,PhysRevA.95.012323}. The rates that can be achieved
when aiming at asymptotically transforming one multi-partite state
into another with LOCC are far from clear. It is not even understood
what the ``ingredients'' of multi-partite entanglement theory are 
\cite{PhysRevA.63.012307,PhysRevA.72.059907},
so the basic units of multi-partite entanglement from which any other pure state
can be asymptotically reversibly prepared. This state of affairs is
unfortunate, and even more so since multi-partite states come again
more into the focus of attention in the light of the observation that
elements of the vision of a quantum network -- or the ``quantum internet'' \cite{QuantumInternet}
-- may become an experimental reality in the not too far future. It
is not that multi-partite entanglement ceases to have a resource character:
For example, Greenberger-Horne-Zeilinger (GHZ) states are known to
constitute a resource for quantum secret sharing \cite{PhysRevA.59.1829,SecretSharing},
the probably best known multi-partite cryptographic primitive. Progress
on stochastic conversion for several copies of multi-partite states
was made recently \cite{WAndGHZ,FromGHZ}. However, given a collection
of arbitrary pure states, it is not known at what rate such states
could be asymptotically distilled under LOCC.


In this work, we report surprisingly substantial progress on the old
question of the rate at which GHZ and other multi-partite states can
be asymptotically distilled from arbitrary pure states. Surprising,
in that much of the technical substance can be delegated to the powerful
machinery of entanglement combing~\cite{Yang2009}, putting it here into a fresh context,
which in turn can be seen to derive from quantum
state merging \cite{Horodecki2005,Horodecki2007}, assisted entanglement
distillation~\cite{DiVincenzo1999,Smolin2005}, and time-sharing,
meaning, using resource states in different roles in the asymptotic
protocol. The basic insight underlying the analysis is that entanglement
combing provides a reference, a helpful normal form rooted in the
better understood theory of bipartite entanglement, that can be used
in order to assess rates of asymptotic multi-partite state conversion.
Basically, putting entanglement combing to good work, therefore, we
are in the position to make significant progress on the question of
entanglement transformation rates in a general setting.

\medskip
\textbf{\emph{Multi-partite state conversion. }}We consider the problem
of converting an $n$-partite state $\rho$ into $\sigma$ via $n$-partite LOCC. In particular, we are interested in the optimally achievable
asymptotic rate for this procedure, which can be formally defined
as 
\begin{equation}
R(\rho\rightarrow\sigma)=\sup\left\{ r:\lim_{k\rightarrow\infty}\left(\inf_{\Lambda}\left\Vert \Lambda\left(\rho^{\otimes k}\right)-\sigma^{\otimes\left\lfloor rk\right\rfloor }\right\Vert _{1}\right)=0\right\} .\label{eq:R}
\end{equation}
Here, $\Lambda$ reflects an $n$-partite LOCC operation and $||M||_{1}=\mathrm{Tr}\sqrt{M^{\dagger}M}$
denotes the trace norm. This problem has a known solution in the bipartite
case $n=2$ for conversion between arbitrary pure states $\psi^{AB}\rightarrow\phi^{AB}$,
rooted in Shannon theory. The corresponding rate in this case can
be written as~\cite{PureBipartiteBennett} 
\begin{equation}
R(\psi^{AB}\rightarrow\phi^{AB})=\frac{S(\psi^{A})}{S(\phi^{A})},\label{eq:bipartite}
\end{equation}
where $S(\rho)=-\mathrm{Tr}(\rho\log_{2}\rho)$ is the von Neumann
entropy. Moreover, $\psi^{AB}$ indicates that the state is shared between parties
referred to as Alice and Bob, while $\psi^{A}$ reflects the reduced
state of Alice.

This simple picture ceases to hold in any setting beyond the bipartite
one. Indeed, significantly less is known in the multi-partite setting
for $n\geq3$ \cite{MultipartiteReview}.
Needless
to say, the bipartite solution~(\ref{eq:bipartite}) readily gives
upper bounds on the rates in multi-partite settings. For example, for
conversion between tripartite pure states $\psi^{ABC}\rightarrow\phi^{ABC}$,
it must be true that 
\begin{equation}
R(\psi^{ABC}\rightarrow\phi^{ABC})\leq\min\left\{ \frac{S(\psi^{A})}{S(\phi^{A})},\frac{S(\psi^{B})}{S(\phi^{C})},\frac{S(\psi^{C})}{S(\phi^{C})}\right\} .\label{eq:upper-bound}
\end{equation}
This follows from the fact that any tripartite LOCC protocol is also
bipartite with respect to any of the bipartitions. If the desired
final state $\phi^{ABC}$ is the GHZ state with state vector $\ket{\mathrm{GHZ}}=(\ket{000}+\ket{111})/\sqrt{2}$,
the bound in Eq.~(\ref{eq:upper-bound}) is known to be achievable
whenever one of the reduced states $\psi^{AB}$, $\psi^{BC}$ or $\psi^{AC}$
is separable~\cite{Smolin2005}. 

We also note that for some states the bound in Eq.~(\ref{eq:upper-bound}) is a strict inequality. This can be seen by considering
the scenario where each of the parties holds two qubits respectively. Consider now the transformation
\begin{equation}
\begin{split}\ket{\mathrm{GHZ}}^{A_{1}B_{1}C_{1}}\otimes & \ket{\mathrm{GHZ}}^{A_{2}B_{2}C_{2}}\rightarrow\\
 & \ket{\Phi^{+}}^{A_{1}B_{1}}\otimes\ket{\Phi^{+}}^{A_{2}C_{1}}\otimes\ket{\Phi^{+}}^{B_{2}C_{2}},
\end{split}
\end{equation}
i.e., the parties aim to transform two GHZ states into Bell states $\ket{\Phi^+}=(\ket{00}+\ket{11})/\sqrt{2}$ which
are equally distributed among all the parties. It is straightforward
to check that in this case the bound in Eq.~(\ref{eq:upper-bound})
becomes $R\leq1$. However, the bound is not achievable, as the aforementioned
transformation cannot be performed with unit rate~\cite{Linden2005}.

\medskip
\textbf{\emph{Lower bound on conversion rates for three parties}.} The above discussion
suggests that the bound in Eq.~(\ref{eq:upper-bound}) is a very
rough estimate for general transformations and is saturated only for
very specific sets of states, having zero volume in the set of all
pure states. Quite surprisingly, we will see below that this is not
the case: there exist large families of tripartite pure states which
saturate the bound~(\ref{eq:upper-bound}). This will follow from
a very general and surprisingly simple lower bound on conversion rate,
which will be presented below in Theorem~\ref{thm:bound}.

The methods developed here build upon the machinery of \emph{entanglement
combing}, which was introduced and studied for general $n$-partite
scenarios in Ref.\ \cite{Yang2009}. In the specific tripartite setting, entanglement combing aims to transform the initial state $\psi^{ABC}$
into a state of the form $\mu^{A_{1}B}\otimes\nu^{A_{2}C}$ with pure
bipartite states $\mu$ and $\nu$. The following Lemma restates the
results from Ref.\ \cite{Yang2009} in a form which will be suitable
for the purpose of this work.
\begin{lem}[Conditions from tripartite entanglement combing]
\label{lem:combing}
The transformation 
\begin{equation}
\psi^{ABC}\rightarrow\mu^{A_{1}B}\otimes\nu^{A_{2}C}\label{eq:combing-1}
\end{equation}
is possible via asymptotic LOCC if and only if \begin{subequations}\label{eq:combing-2}
\begin{align}
E(\mu^{A_{1}B})+E(\nu^{A_{2}C}) & \leq S(\psi^{A}),\\
E(\mu^{A_{1}B}) & \leq S(\psi^{B}),\\
E(\nu^{A_{2}C}) & \leq S(\psi^{C}).
\end{align}
\end{subequations} 
\end{lem}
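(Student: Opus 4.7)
The plan is to establish the equivalence in both directions by combining the tri-partite entanglement combing protocol of Ref.~\cite{Yang2009} with standard bipartite entanglement manipulation.

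For the \emph{necessity} of the three inequalities I would invoke the monotonicity of the entropy of entanglement across each bipartition under approximate asymptotic LOCC. For a pure state, the entanglement across a cut equals the von Neumann entropy of either reduced state, and it is monotone under asymptotic LOCC up to vanishing corrections via the Fannes--Audenaert continuity bound. Applied to the cut $A|BC$, the input carries entanglement $S(\psi^A)$, while the product target $\mu^{A_1B}\otimes\nu^{A_2C}$ carries $E(\mu^{A_1B})+E(\nu^{A_2C})$ across that cut, yielding the first inequality. The cuts $B|AC$ and $C|AB$ furnish the other two, since on the target side only a single tensor factor contributes entanglement across each.

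For the \emph{sufficiency} direction I would run the tri-partite combing protocol of Ref.~\cite{Yang2009} with Alice as the ``handle'' of the comb. On $n$ copies of $\psi^{ABC}$ it outputs, with asymptotically vanishing trace-distance error, approximately $n r_B$ maximally entangled pairs on $AB$ and $n r_C$ pairs on $AC$ for any rate pair $(r_B,r_C)$ in the region defined by $r_B+r_C\le S(\psi^A)$, $r_B\le S(\psi^B)$, $r_C\le S(\psi^C)$. By hypothesis, the pair $(E(\mu^{A_1B}),E(\nu^{A_2C}))$ sits in this region (modulo an arbitrarily small interior margin, if needed). I would then invoke bipartite entanglement dilution~\cite{PureBipartiteBennett} independently across the two bipartite channels: $nE(\mu^{A_1B})$ singlets on $AB$ produce $n$ copies of $\mu^{A_1B}$ at unit rate, and analogously for $\nu^{A_2C}$ on $AC$. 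A triangle inequality on the trace-distance errors of the two stages closes the argument.

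The only genuine technical content is the achievability of the combing rate region itself, which is entirely delegated to Ref.~\cite{Yang2009}; both the monotonicity argument for the converse and the bipartite dilution step for achievability are essentially textbook once the combing primitive is in hand. The main point of care, and where I expect the bookkeeping to be slightly delicate, is the two-stage concatenation: combing outputs maximally entangled pairs rather than $\mu$ and $\nu$ directly, so the block lengths in the combing and dilution stages have to be matched consistently, with sublinear corrections absorbed into vanishing-rate overheads. Targeting rates strictly inside the combing region and then shrinking the margin avoids any rate loss in the asymptotic limit and ensures that the final trace-distance error tends to zero.
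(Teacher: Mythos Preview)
Your proposal is correct in both directions, but the achievability argument differs from the paper's. You treat the tri-partite combing rate region as a black box imported from Ref.~\cite{Yang2009} and then append a bipartite dilution step. The paper, by contrast, \emph{re-derives} the achievability of that region explicitly in the tri-partite case: it distinguishes three cases according to the ordering of $S(\psi^{A})$, $S(\psi^{B})$, $S(\psi^{C})$, and in each case identifies two extreme rate pairs---obtained either by quantum state merging (Bob or Charlie into Alice, gaining singlets at rate given by a conditional entropy) or by assisted entanglement distillation (one party assists the other two)---and then interpolates between them by time-sharing to sweep out the full boundary $E(\mu)+E(\nu)=S(\psi^{A})$ subject to $E(\mu)\leq S(\psi^{B})$, $E(\nu)\leq S(\psi^{C})$. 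Your route is shorter and perfectly valid provided Ref.~\cite{Yang2009} is granted; the paper's route is self-contained and makes transparent which operational primitives (merging versus assisted distillation) are responsible for which corners of the region, at the cost of a case analysis. The necessity direction is handled the same way in both: the paper simply remarks that violation of any inequality~(\ref{eq:combing-2}) makes the transformation impossible, which is exactly your bipartition-monotonicity argument.
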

\noindent We refer to Appendix~\ref{sec:ProofLemma} for the proof of the Lemma. Using
this result, we are now in position to present a tight lower bound
on the transformation rate between tripartite pure states. 

\begin{figure}
\begin{center}
	\includegraphics[width=1\linewidth]{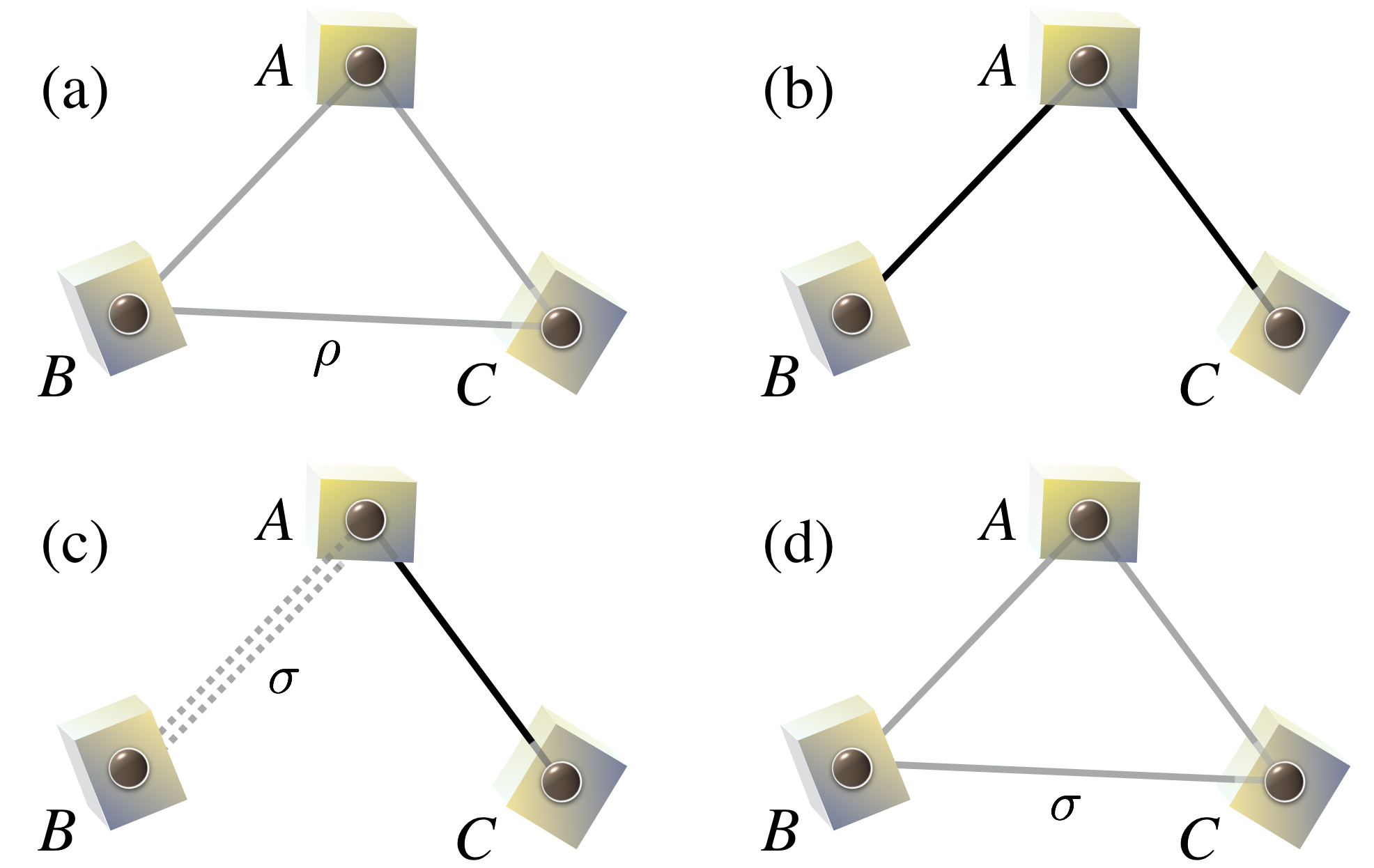}
\end{center}
\caption{\label{fig:Conversion}Conversion of a multi-partite resource state $\rho$ (a) into the desired final state $\sigma$ (d). The conversion is achieved via entanglement combing, i.e., via transforming the initial state $\rho$ into singlets [black solid lines in (b)]. One of the singlets is then converted into the desired final state $\sigma$ [gray dotted lines in (c)]. The remaining singlets [black solid line in (c)] are then used for teleporting the parts of $\sigma$ to the remaining parties.}
\end{figure}

\begin{thm}[Lower bound for state transformations]
\label{thm:bound}For tripartite pure states $\psi^{ABC}$ and $\phi^{ABC}$,
the LOCC conversion rate is bounded from below as 
\begin{equation}
R(\psi^{ABC}\rightarrow\phi^{ABC})\geq\min\left\{ \frac{S(\psi^{A})}{S(\phi^{B})+S(\phi^{C})},\frac{S(\psi^{B})}{S(\phi^{B})},\frac{S(\psi^{C})}{S(\phi^{C})}\right\} .\label{eq:bound-1}
\end{equation}
\end{thm}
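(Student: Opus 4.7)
The plan is to realize the conversion through the scheme sketched in Fig.~1: distill singlets on the $A$--$B$ and $A$--$C$ cuts by tripartite entanglement combing, have Alice locally prepare the desired copies of $\phi^{ABC}$ in her laboratory, and then distribute the $B$ and $C$ subsystems to Bob and Charlie by Schumacher-compressed teleportation. Matching the resources produced by combing against those consumed by teleportation will reproduce exactly the three inequalities on the right-hand side of~(\ref{eq:bound-1}).

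Concretely, I would fix a rate $r$ strictly below the minimum in~(\ref{eq:bound-1}), so that
\begin{align*}
r\bigl[S(\phi^B)+S(\phi^C)\bigr] & < S(\psi^A),\\
r\,S(\phi^B) & < S(\psi^B),\\
r\,S(\phi^C) & < S(\psi^C).
\end{align*}
First, I apply Lemma~\ref{lem:combing} with $\mu^{A_1B}$ chosen as $\lceil rn\,S(\phi^B)\rceil$ singlets across $A$--$B$ and $\nu^{A_2C}$ as $\lceil rn\,S(\phi^C)\rceil$ singlets across $A$--$C$; the three combing conditions translate verbatim into the three inequalities above, so $\psi^{\otimes n}$ is mapped to these singlets by asymptotic LOCC with vanishing trace-norm error. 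Second, Alice prepares $\lfloor rn\rfloor$ copies of $\phi^{A'B'C'}$ inside her laboratory, Schumacher-compresses her $(B')^{\otimes\lfloor rn\rfloor}$ register down to $(1+o(1))\,rn\,S(\phi^B)$ qubits, teleports the compressed register to Bob using the combed $A$--$B$ singlets, and Bob decompresses; the analogous procedure with the $(C')$ register transmits those subsystems to Charlie. Equivalently, the last step is quantum state merging from Alice to Bob (respectively to Charlie) with trivial side information, whose asymptotic ebit cost per copy is $S(\phi^B)$ (respectively $S(\phi^C)$) -- which is exactly the budget provided by combing.

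The main technical obstacle will be composing these three asymptotic subprotocols into a single tripartite LOCC map $\Lambda$ such that $\|\Lambda(\psi^{\otimes n})-\phi^{\otimes\lfloor rn\rfloor}\|_1\to 0$. I would handle this by a standard block argument: choose $n$ large enough that the combing error, the Schumacher compression error, and the resulting (slightly noisy) teleportation error are each below $\varepsilon/3$, and combine them via monotonicity of the trace distance under CPTP maps and the triangle inequality. Letting $\varepsilon\to 0$ and then $r$ approach the minimum in~(\ref{eq:bound-1}) from below yields the claimed lower bound. The footnote accompanying the theorem ensures that none of the denominators vanish, so the minimum is well defined throughout.
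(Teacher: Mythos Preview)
Your proposal is correct and follows essentially the same strategy as the paper: comb $\psi$ into bipartite entanglement on the $A$--$B$ and $A$--$C$ cuts calibrated to $rS(\phi^B)$ and $rS(\phi^C)$, then use Schumacher compression plus teleportation to distribute the copies of $\phi$, with Lemma~\ref{lem:combing} yielding exactly the three inequalities in~(\ref{eq:bound-1}). The only cosmetic difference is that the paper has Alice and Charlie first create $\phi^{A_2A_3C}$ by bipartite pure-state conversion from $\nu^{A_2C}$ and then teleport only the $A_3$ register to Bob, whereas you have Alice prepare $\phi$ entirely locally and teleport to both Bob and Charlie; the resource accounting is identical.
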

\begin{proof}
We prove this bound by presenting an explicit protocol achieving the
bound, which is also summarized in Fig.~\ref{fig:Conversion}. In the first step, the parties apply entanglement combing $\psi^{ABC}\rightarrow\mu^{A_{1}B}\otimes\nu^{A_{2}C}$
in such a way that the following equalities are fulfilled for some
$r\geq0$, 
\begin{equation}
E(\mu^{A_{1}B})=rS(\phi^{B}),\,\,\,\,\,\,\,\,E(\nu^{A_{2}C})=rS(\phi^{C}).\label{eq:rates}
\end{equation}
The significance of this specific choice will become clear in a moment.
In the next step, Alice and Charlie apply LOCC for transforming the
state $\nu^{A_{2}C}$ into the desired final state $\phi^{A_{2}A_{3}C}$.
Since this is a bipartite LOCC protocol, the rate for this process
is given by $E(\nu^{A_{2}C})/S(\phi^{C})$. Note that due to Eqs.~(\ref{eq:rates}),
this rate is equal to $r$.

In a next step, Alice applies what is called Schumacher compression~\cite{Schumacher1995}
to her register $A_{3}$. The overall compression rate per copy of
the initial state $\psi^{ABC}$ is given as 
\begin{equation}
\tilde{r}=rS(\phi^{A_{3}})=rS(\phi^{B}),
\end{equation}
where in the last equality we used the fact that $S(\phi^{A_{3}})=S(\phi^{B})$.
Due to Eqs.~(\ref{eq:rates}), this rate interestingly coincides
with the entanglement of the state $\mu^{A_{1}B}$, 
\begin{equation}
\tilde{r}=E(\mu^{A_{1}B}).\label{eq:compression}
\end{equation}
In a final step, Alice and Bob distill the states $\mu^{A_{1}B}$
into maximally entangled bipartite singlets, and use them to teleport~\cite{BennettTeleportation,TeleportationReview} the (compressed)
particle $A_{3}$ to Bob. Due to Eq.~(\ref{eq:compression}), Alice
and Bob share exactly the right amount of entanglement for this procedure,
i.e., the process is possible with rate one and no entanglement is
left over. In summary, the overall protocol transforms the state $\psi^{ABC}$
into $\phi^{ABC}$ at rate $r$.

For completing the proof, we will now show that $r$ can be chosen
such that 
\begin{equation}
r=\min\left\{ \frac{S(\psi^{A})}{S(\phi^{B})+S(\phi^{C})},\frac{S(\psi^{B})}{S(\phi^{B})},\frac{S(\psi^{C})}{S(\phi^{C})}\right\} .\label{eq:r}
\end{equation}
This can be seen directly by inserting Eqs.~(\ref{eq:rates}) into
Eqs.~(\ref{eq:combing-2}). In particular, the rate $r$ can attain
any value which is simultaneously compatible with inequalities 
\begin{equation}
r\leq\frac{S(\psi^{A})}{S(\phi^{B})+S(\phi^{C})},\,\,\,r\leq\frac{S(\psi^{B})}{S(\phi^{B})},\,\,\,r\leq\frac{S(\psi^{C})}{S(\phi^{C})}.
\end{equation}
This completes the proof of the theorem. 
\end{proof}
We stress some important aspects and implications of this theorem.
Whenever the minimum
in Eq.~(\ref{eq:bound-1}) is attained on the second or third entry,
the lower bound coincides with the upper bound in Eq.~(\ref{eq:upper-bound}).
This means that in all these instances the conversion problem is completely
solved, giving rise to the rate 
\begin{equation}
R(\psi^{ABC}\rightarrow\phi^{ABC})=\min\left\{ \frac{S(\psi^{A})}{S(\phi^{A})},\frac{S(\psi^{B})}{S(\phi^{B})},\frac{S(\psi^{C})}{S(\phi^{C})}\right\} . \label{eq:tight}
\end{equation}
Moreover, the bound in Eq.~(\ref{eq:bound-1}) can be immediately
generalized by interchanging the roles of the parties, i.e., 
\begin{align}
\begin{split}R(\psi^{ABC} \rightarrow\phi^{ABC})\geq & \min\left\{ \frac{S(\psi^{B})}{S(\phi^{A})+S(\phi^{C})},\frac{S(\psi^{A})}{S(\phi^{A})},\frac{S(\psi^{C})}{S(\phi^{C})}\right\} ,\label{eq:bound-2}
\end{split}
\\
\begin{split}R(\psi^{ABC} \rightarrow\phi^{ABC})\geq & \min\left\{ \frac{S(\psi^{C})}{S(\phi^{A})+S(\phi^{B})},\frac{S(\psi^{A})}{S(\phi^{A})},\frac{S(\psi^{B})}{S(\phi^{B})}\right\} .\label{eq:bound-3}
\end{split}
\end{align}
The best bound is obtained by taking the maximum of Eqs.~(\ref{eq:bound-1}),
(\ref{eq:bound-2}) and (\ref{eq:bound-3}).

Our results also shed new light on reversibility questions for tri-partite
state transformations. In general, a transformation $\psi\rightarrow\phi$
is said to be reversible if the conversion rates fulfill the relation
\begin{equation}
R(\psi\rightarrow\phi)={R(\phi\rightarrow\psi)}^{-1}.
\end{equation}
Let now $\psi$ and $\phi$ be two states for which the bound in Theorem~\ref{thm:bound}
is tight, e.g., $R(\psi\rightarrow\phi)=S(\psi^{A})/S(\phi^{A})$.
Due to Eq.~(\ref{eq:upper-bound}) it must be that $S(\psi^{A})/S(\phi^{A})\leq S(\psi^{B})/S(\phi^{B})$
in this case. If this inequality is strict (which will be the generic
case), we obtain for the inverse transformation $\phi\rightarrow\psi$
\begin{equation}
R(\phi\rightarrow\psi)\leq\frac{S(\phi^{B})}{S(\psi^{B})}<\frac{S(\phi^{A})}{S(\psi^{A})}={R(\psi\rightarrow\phi)}^{-1},
\end{equation}
where the first inequality follows from Eq.~(\ref{eq:upper-bound}).
These results show that those states which saturate the bound~(\ref{eq:upper-bound})
do not allow for reversible transformations in the generic case.

We will now comment on the limits of the approach presented here.
In particular, it is important to note that the lower bound in Theorem~\ref{thm:bound}
is not optimal in general. This can be seen in the most simple way
by considering the trivial transformation which leaves the state unchanged,
i.e., $\psi^{ABC}\rightarrow\psi^{ABC}$. 
Clearly, this can be achieved
with unit rate $R=1$. However, if we apply the lower bound in Theorem~\ref{thm:bound}
to this transformation, we get $R\geq{S(\psi^{A})}/[S(\psi^{B})+S(\psi^{C})]$.
Due to subadditivity, it follows that that our lower bound is in general below the achievable unit rate in this case.

\medskip
\textbf{\emph{Multi-partite pure 
states.}} In the discussion
so far, we have focused on tripartite pure states. However, the presented
tools can readily be applied to more general scenarios involving an
arbitrary number of parties. In this more general setup the parties will be called Alice $(A)$ and $N$ Bobs $(B_i)$ with $1\leq i \leq N$. The aim of the process in this case is the asymptotic conversion of the $N+1$-partite pure state $\psi = \psi^{AB_1 \ldots B_N}$ into the state $\phi = \phi^{AB_1 \ldots B_N}$. The general idea for this procedure follows the same line of reasoning as in the tripartite scenario discussed above. In the first step, entanglement combing is applied to the
state $\psi$, i.e., the transformation 
\begin{equation}
\psi \rightarrow\mu_{1}^{A_{1}B_1}\otimes\mu_{2}^{A_{2}B_2}\otimes\cdots\otimes\mu_N^{A_NB_N}
\end{equation}
with pure states $\mu_{i}$. In the next step, Alice and the first Bob $B_1$ transform
their state $\mu_{1}^{A_{1}B_1}$ into the desired final state $\phi$
via bipartite LOCC. In the final step, Alice applies Schumacher compression
to parts of her state $\phi$, and sends these parts to each of the remaining Bobs
$B_2,\ldots,B_N$ by using entanglement obtained in the first step of
this protocol. As in the tripartite case, this protocol can be further optimized by interchanging the roles of the parties and applying the time-sharing technique. 

\begin{thm}[Lower bound for multi-partite state conversion]
\label{thm:bound2}For $N+1$-partite pure states $\psi^{AB_{1}\dots B_{N}}$
and $\phi^{AB_{1}\dots B_{N}}$, the LOCC conversion rate is bounded from below as
\begin{equation}
R(\psi^{AB_{1}\dots B_{N}}\rightarrow\phi^{AB_{1}\dots B_{N}})\geq\min_X\left\{ \frac{S(\psi^{AX})}{\sum_{B_{i}\notin X}S(\phi^{B_{i}})}\right\},\label{eq:lower-bound}
\end{equation}
where $X$ denotes a subsystem of all Bobs.
\end{thm}

The theorem is proven in Appendix \ref{ap:nconver}. By using similar arguments as below Eq.~(\ref{eq:upper-bound}), an upper bound to the conversion rate is found to be
\begin{equation}
R(\psi^{AB_{1}\dots B_{N}}\rightarrow\phi^{AB_{1}\dots B_{N}})\leq\min_i\frac{S(\psi^{B_{i}})}{S(\phi^{B_{i}})}.\label{eq:upper-bound-mult}
\end{equation}
 The bounds in Eqs.~(\ref{eq:lower-bound}) and (\ref{eq:upper-bound-mult}) coincide if the following equality holds true for some 
$1\leq i \leq N$,
\begin{equation}
\frac{S(\psi^{B_i})}{S(\phi^{B_i})}=\min_X\left\{ \frac{S(\psi^{AX})}{\sum_{B_{j}\notin X}S(\phi^{B_{j}})}\right\}.
\end{equation}
In those instances, Theorem \ref{thm:bound2} leads to a full solution of the conversion problem, and the corresponding rate is given by
\begin{equation}
R(\psi^{AB_{1}\dots B_{N}} \rightarrow \phi^{AB_{1}\dots B_{N}})=\min_i\frac{S(\psi^{B_{i}})}{S(\phi^{B_{i}})}.
\end{equation}
Again, as in the tripartite case, the bound of Eq.~(\ref{eq:lower-bound})
can be generalized by interchanging the roles of Alice and different Bobs.

\medskip
\textbf{\emph{Generalization to multi-partite mixed states.}} 
We will now show that the ideas which led to lower bounds on conversion rates in the previous sections can also be used in this mixed-state scenario. We will demonstrate this on a specific example, considering the transformation
\begin{equation}
\ket{\mathrm{GHZ}}\!\bra{\mathrm{GHZ}}\rightarrow\sigma,
\end{equation}
where $\ket{\mathrm{GHZ}}=(\ket{0}^{\otimes N+1}+\ket{1}^{\otimes N+1})/\sqrt{2}$
denotes an $N+1$-partite GHZ state vector, and $\sigma=\sigma^{AB_1\ldots B_N}$
is an arbitrary $N+1$-partite mixed state. As we show in Appendix~\ref{sec:GHZ},
by using similar methods as in previous sections, we obtain a lower bound on the transformation
rate, 
\begin{equation}
R(\ket{\mathrm{GHZ}}\!\bra{\mathrm{GHZ}}\ \rightarrow\sigma)\geq\frac{1}{E_{\mathrm{c}}^{A|B_1\ldots B_N}(\sigma)+\sum_{j=3}^N S(\sigma^{B_j})},
\end{equation}
where $E_{\mathrm{c}}^{A|B_1\ldots B_N}$ denotes the entanglement cost \cite{EntCost}
between Alice and all the other Bobs.

The upper bound (\ref{eq:upper-bound-mult}) for the transformation rate $R$ can be generalized as (see Eq.~(146) in Ref.~\cite{Horodecki2009})
\begin{equation}
R(\rho\rightarrow\sigma)\leq\min_{\mathcal{P}}\frac{E_{\infty}^{\mathcal{P}|\overline{\mathcal{P}}}(\rho)}{E_{\infty}^{\mathcal{P}|\overline{\mathcal{P}}}(\sigma)}.\label{eq:upper-bound-2}
\end{equation}
Here, $E_{\infty}(\rho)=\lim_{n\rightarrow\infty}E_{\mathrm{r}}(\rho^{\otimes n})/n$
is the regularized relative entropy of entanglement \cite{AsymptoticRelent,Winter2016},
and $\mathcal{P}|\overline{\mathcal{P}}$ denotes a bipartition of
all the $N+1$ subsystems~\footnote{If there is a bipartition $\mathcal{P}|\overline{\mathcal{P}}$ with
$E_{\infty}^{\mathcal{P}|\overline{\mathcal{P}}}(\rho)=E_{\infty}^{\mathcal{P}|\overline{\mathcal{P}}}(\sigma)=0$,
this bipartition is not taken into account in Eq.~(\ref{eq:upper-bound-2}).}.

\medskip
\textbf{\emph{Applications in quantum networks.}} It should be clear
that the results established here readily allow to assess how resources
for multi-partite protocols can be prepared from multi-partite states given in some form. 
In particular, GHZ states readily provide
a resource for quantum secret sharing~\cite{PhysRevA.59.1829,SecretSharing}
in which a message is split into parts so that no subset of parties is able to access
the message, while at the same time the entire set of parties is. 
It also gives rise to an efficient scheme of quantum secret
sharing requiring purely classical communication during the reconstruction
phase \cite{GHZAnne}. 

The significance in the established results
on multi-partite entanglement transformations hence lies in the way
they help understanding how multi-partite resources for protocols beyond point-to-point
schemes in quantum networks 
 can be prepared
and manipulated. We expect this to be particularly important when thinking of applications of
transforming 
resources into the desired form in quantum networks
\cite{QRouting1,QRouting2,QRouting3}: Here, multi-partite entanglement 
is conceived to be created by local processes and bi-partite transmissions involving pairs
of nodes, followed by steps of 
entanglement manipulation, which presumably involve instances of 
classical routing techniques. Hence, we see this work as a 
significant 
contribution to how
a quantum internet \cite{QuantumInternet} can possibly be conceived.

\medskip
\textbf{\emph{Conclusions.}} In this work, we have reported substantial progress on asymptotic state transformation via multipartite local operations and classical communication, tackling an important
long-standing problem which to large extent remained open since the early development of quantitative entanglement theory \cite{PhysRevA.63.012307}.
Similar techniques may also prove helpful in the study of other quantum resource theories different from entanglement, such as the resource theory of quantum coherence~\cite{Streltsov2016} and quantum thermodynamics \cite{Lostaglio2015,Cwiklinski2015}.

Putting notions of entanglement combing into a fresh light, we have
been able to derive stringent bounds on multi-partite entanglement
transformations. This progress seems particularly relevant in the
light of the advent of quantum networks and the quantum internet in
which multi-partite features are directly exploited beyond point-to-point
architectures. It is the hope that the present work stimulates further
progress in the understanding of multi-partite protocols. 

\medskip
\textbf{\emph{Acknowledgements. }} We acknowledge discussions with Pawe\l{} Horodecki and financial support by the Alexander von Humboldt-Foundation, the National  Science Center in Poland (POLONEZ UMO-2016/21/P/ST2/04054), the BMBF (Q.com, Q.Link.X), and the ERC (TAQ). This work was further supported by the "Quantum Coherence and Entanglement for Quantum Technology" project, carried out within the First Team programme of the Foundation for Polish Science co-financed by the European Union under the European Regional Development Fund.

 \bibliographystyle{apsrev4-1}

%

\appendix

\section{\label{sec:ProofLemma} Proof of Lemma \ref{lem:combing}}

The proof presented below will be based on the protocol known as
\emph{entanglement combing}~\cite{Yang2009}. We will review this
protocol for a tripartite state $\psi = \psi^{ABC}$. In this case, entanglement combing
transforms the state $\psi^{ABC}$ into $\mu^{A_{1}B}\otimes\nu^{A_{2}C}$
with pure states $\mu$ and $\nu$. Clearly, the transformation is not possible if any of the inequalities~(\ref{eq:combing-2})
is violated. We will now show the converse, i.e., any pair of pure
states $\mu^{A_{1}B}$ and $\nu^{A_{2}C}$ which fulfill the inequalities~(\ref{eq:combing-2})
can be obtained from $\psi^{ABC}$ via LOCC in the asymptotic limit.
For this, we will distinguish between the following cases.

\textbf{Case 1:} $S(\psi^{A})\geq S(\psi^{B})\geq S(\psi^{C})$. In this case,
Bob can send his part of the state $\psi$ to Alice by applying quantum
state merging~\cite{Horodecki2005,Horodecki2007}. This procedure
is possible by using LOCC operations between Alice and Bob. Additionally,
Alice and Bob gain singlets at rate $S(\psi^{A})-S(\psi^{AB})=S(\psi^{A})-S(\psi^{C})$.
The overall process thus achieves the transformation~(\ref{eq:combing-1})
with 
\begin{equation}
\begin{aligned}E(\mu^{A_{1}B}) & =S(\psi^{A})-S(\psi^{C}),\\
E(\nu^{A_{2}C}) & =S(\psi^{C}).
\end{aligned}
\label{eq:merging-1}
\end{equation}
Alternatively, Charlie can send his part of the state $\psi$ to Alice,
thus gaining singlets at rate $S(\psi^{A})-S(\psi^{B})$. In this
way they achieve the transformation~(\ref{eq:combing-1}) with 
\begin{equation}
\begin{aligned}E(\mu^{A_{1}B}) & =S(\psi^{B}),\\
E(\nu^{A_{2}C}) & =S(\psi^{A})-S(\psi^{B}).
\end{aligned}
\label{eq:merging-2}
\end{equation}

In the next step we apply-time sharing, i.e., the first procedure
is performed with probability $p$ and the second with probability
$(1-p)$. In this way, we see that the transformation~(\ref{eq:combing-1})
is possible for any pair of states $\mu^{A_{1}B}$ and $\nu^{A_{2}C}$
with the property 
\begin{equation}
\begin{aligned}E(\mu^{A_{1}B}) & =p\left(S(\psi^{A})-S(\psi^{C})\right)+(1-p)S(\psi^{B}),\\
E(\nu^{A_{2}C}) & =pS(\psi^{C})+(1-p)\left(S(\psi^{A})-S(\psi^{B})\right).
\end{aligned}
\end{equation}
By using subadditivity of von Neumann entropy it is now straightforward to check that for a suitable choice of $p$,
the quantities $E(\mu^{A_{1}B})$ and $E(\nu^{A_{2}C})$ can attain
any value compatible with conditions \begin{subequations} 
\begin{align}
E(\mu^{A_{1}B})+E(\nu^{A_{2}C}) & =S(\psi^{A}),\\
E(\mu^{A_{1}B}) & \leq S(\psi^{B}),\\
E(\nu^{A_{2}C}) & \leq S(\psi^{C}).
\end{align}
\end{subequations}This completes the proof of Lemma~\ref{lem:combing} for 
Case 1.

\textbf{Case 2:} $S(\psi^{B})\geq S(\psi^{C})\geq S(\psi^{A})$. In this case,
Alice, Bob, and Charlie apply assisted entanglement distillation~\cite{DiVincenzo1999,Smolin2005},
with Charlie being the assisting party. This procedure achieves the
transformation~(\ref{eq:combing-1}) with
\begin{equation}
\begin{aligned}E(\mu^{A_{1}B}) & =\min\left\{ S(\psi^{A}),S(\psi^{B})\right\} =S(\psi^{A}),\\
E(\nu^{A_{2}C}) & =0.
\end{aligned}
\label{eq:assisted}
\end{equation}
Alternatively, they can apply assisted entanglement distillation with
Bob being the assisting party, thus achieving 
\begin{equation}
\begin{aligned}E(\mu^{A_{1}B}) & =0,\\
E(\nu^{A_{2}C}) & =\min\left\{ S(\psi^{A}),S(\psi^{C})\right\} =S(\psi^{A}).
\end{aligned}
\end{equation}
By applying time-sharing, we see that we can achieve the transformation~(\ref{eq:combing-1})
with any states $\mu^{A_{1}B}$ and $\nu^{A_{2}C}$ fulfilling \begin{subequations}
\begin{align}
E(\mu^{A_{1}B}) & =pS(\psi^{A}),\\
E(\nu^{A_{2}B}) & =(1-p)S(\psi^{A}).
\end{align}
\end{subequations} 
This completes the proof of Lemma~\ref{lem:combing} for Case 2.

\textbf{Case 3:} $S(\psi^{B})\geq S(\psi^{A})\geq S(\psi^{C})$. Here, we will
apply a combination of protocols used in Case 1 and 2. In particular,
Bob can send his part of the state $\psi$ to Alice by quantum state
merging, see Eq.~(\ref{eq:merging-1}). Alternatively, they can
apply assisted entanglement distillation, see Eq.~(\ref{eq:assisted}).
By time-sharing we obtain
\begin{equation}
\begin{aligned}E(\mu^{A_{1}B}) & =S(\psi^{A})-pS(\psi^{C}),\\
E(\nu^{A_{2}C}) & =pS(\psi^{C}).
\end{aligned}
\end{equation}
By a suitable choice of the probability $p$ it is now possible to
obtain any pair of states $\mu^{A_{1}B}$ and $\nu^{A_{2}C}$ such
that 
\begin{equation}
\begin{aligned}E(\mu^{A_{1}B})+E(\nu^{A_{2}C}) & =S(\psi^{A}),\\
E(\mu^{A_{1}B}) & \leq S(\psi^{A}),\\
E(\nu^{A_{2}C}) & \leq S(\psi^{C}).
\end{aligned}
\end{equation}
This completes the proof of Lemma~\ref{lem:combing} for Case 3. Note that any other
case can be obtained from the above three cases by interchanging the
role of Bob and Charlie. Thus, the proof of the Lemma is complete.

\section{\label{ap:nconver}Proof of Theorem \ref{thm:bound2}}

{Here, we present the proof of Theorem \ref{thm:bound2}. The ideas presented in the following generalize the proof of Theorem \ref{thm:bound}
for tripartite pure state conversion. In particular, starting with
the $N+1$-partite state $\psi=\psi^{AB_{1}\dots B_{N}}$, we will apply
entanglement combing \cite{Yang2009}
on Alice and all other parties (here referred to as ``all the Bobs''), aiming to get bipartite entanglement between Alice and each of the parties $B_{i}$. If $E_{i}$ denotes
the entanglement between Alice and $i$-th Bob after this procedure, the rate for state conversion from $\psi$ to $\phi=\phi^{AB_{1}\dots B_{N}}$ is bounded below as 
\begin{equation}
R(\psi\rightarrow\phi)\geq\min_{i}\left\{ \frac{E_{i}}{S(\phi^{B_{i}})}\right\} .\label{eq:C-bound1}
\end{equation}
To achieve conversion at rate $\min_{i}\left\{ E_{i}/S(\phi^{B_{i}})\right\} $,
Alice locally prepares the state $\phi^{A\tilde{A}_{1}\dots \tilde{A}_{N}}$,
applies Schumacher compression~\cite{Schumacher1995} to the registers
$\tilde{A}_{i}$, and distributes them among the Bobs by using entanglement
which has been combed in the previous procedure. In the rest of this section, we will show that combing can achieve an $N$-tuple of singlet rates $(E_1,\dots, E_N)$ such that
\begin{align}
\min_{i}\left\{ \frac{E_{i}}{S(\phi^{B_{i}})}\right\}  & \geq m^{\psi,\phi} := \min_{X}\left\{ \frac{S(\psi^{AX})}{\sum_{B_{i}\notin X}S(\phi^{B_{i}})}\right\} ,\label{eq:C-bound2}
\end{align}
where $X$ denotes a subset of all the Bobs. When there is no ambiguity, we will denote $m^{\psi,\phi}$ simply by $m$.

In the first step of the proof we will consider all possible ways
to merge Bobs' parts of the state $B_{i}$ with Alice. Since in the
scenario considered here we have $N$ Bobs, there are $N!$ different
ways to achieve this, depending on the order of the Bobs in the merging
procedure. We will first consider entanglement $N$-tuple
$(E_{1},\dots ,E_{N})$, where $E_{i}$ denotes the amount of entanglement
shared between Alice and $i$-th Bob after the merging procedure.
For example, taking $N=4$, merging first $B_1$, then $B_2$, then $B_3$ and 
finally $B_4$ to Alice will achieve the $4$-tuple: 
\begin{subequations} \begin{align}
    E_1 &= S(\psi^{A})-S(\psi^{AB_1}),\\
    E_2 &= S(\psi^{AB_1}) - S(\psi^{AB_1B_2}),\\
    E_3 &= S(\psi^{AB_1B_2}) - S(\psi^{AB_1B_2B_3}),\\
    E_4 &= S(\psi^{AB_1B_2B_3}),
\end{align} \end{subequations}
while merging first $B_3$, then $B_1$, then $B_4$ and finally $B_2$ to Alice will achieve the $4$-tuple:
\begin{subequations} \begin{align}
    E_1 &= S(\psi^{AB_3})-S(\psi^{AB_1B_3}),\\
    E_2 &= S(\psi^{AB_1B_3B_4}),\\
    E_3 &= S(\psi^{A}) - S(\psi^{AB_3}),\\
    E_4 &= S(\psi^{AB_1B_3}) - S(\psi^{AB_1B_3B_4}).
\end{align} \end{subequations}

The aforementioned $N!$ merging procedures give rise to $N!$ $N$-tuples, which we
will name the "entanglement extreme points". We note that some of the values $E_i$ can be negative, implying that entanglement is consumed in this case. Proposition 2 of Ref.~\cite{Yang2009} guarantees that for any $N$-tuple $(E_1,\dots, E_N)$ with the properties
\begin{enumerate}[(i)]
    \item $\forall i \in \{1,\dots,N\}$, $E_i \geq 0$,
    \item $(E_1,\dots,E_N)$ is in the convex polytope spanned by the entanglement extreme points, 
\end{enumerate}
there exists an asymptotic LOCC protocol acting on the state $\psi$ and distilling singlets between Alice and each of the Bobs $B_i$ at rate $E_i$.
In the following, we are interested in the renormalized entanglement rates
\begin{align} R_i=\frac{E_i}{S(\phi^{B_i})}, \end{align} see also Eq.~(\ref{eq:C-bound1}).
We can define for each $N$-tuple $(E_1,\dots,E_N)$ an $N$-tuple $(R_1,\dots,R_N)$. 
We will consider from now on only the tuples $(R_1,\dots,R_N)$, which will also be called "rate distributions". We will call "extreme points" the rates distribution defined from the entanglement extreme points. It is easily seen from previous combing condition and Eq.~(\ref{eq:C-bound1}) that, if we find a distribution of rates $(R_1,\dots, R_N)$ satisfying
\begin{enumerate}[(i)]
    \item $\forall i \in \{1,\dots,N\}$, $R_i \geq 0$,
    \item $(R_1,\dots,R_N)$ is in the convex polytope spanned by the extreme points,
\end{enumerate}
we will be able to achieve conversion from $\psi$ to $\phi$ with rate
\begin{equation}
    R(\psi\rightarrow\phi)\geq\min_{i}\left\{R_i\right\}.
\end{equation}
In order to prove Eq.~(\ref{eq:C-bound2}), we will find in the convex set of the extreme points
a point $(R_1,\dots,R_N)$ such that
\begin{align}
\min_{i}\left\{R_i\right\}  & \geq\min_{X}\left\{ \frac{S(\psi^{AX})}{\sum_{B_{i}\notin X}S(\phi^{B_{i}})}\right\}.\label{eq:C-bound3}
\end{align}

The outline of the rest of the proof is as follows: in the first step we will construct by convexity a set of points $(R_1, \dots, R_N)$ satisfying $R_N \geq m^{\psi,\phi}$ from the extreme points. We note that the convex set of these newly constructed points will only contain rate distributions with  $N^{\textrm{th}}$ coordinate superior to $m^{\psi,\phi}$. From our constructed points, we will construct by convexity a new set of points $(R_1,\dots,R_N)$ satisfying $R_{N-1} \geq m^{\psi,\phi}$. This will lead to a set of point satisfying both $R_N \geq m^{\psi,\phi}$ and $R_{N-1} \geq m^{\psi,\phi}$. The procedure
will continue with $R_{N-2}$ until $R_1$. In this way, we will achieve a distribution $(R_1,\dots,R_N)$ satisfying $\forall i\in\{1,\dots,N\}, R_i \geq m^{\psi,\phi}$. Such a distribution will ensure conversion from $\psi$ to $\phi$ with a rate of at least $m^{\psi,\phi}$, as claimed. \newline

\textbf{First step.} Each of the extreme points is the result of merging the Bobs to Alice in different order. Thus, we can associate each extreme point to a permutation $\sigma$ on the set $\{1,\dots,N\}$.
We denote the set of all permutations by $\mathcal{S}_N$. Moreover, $\sigma(k)=l$ means that $B_l$ is the $k^\textrm{th}$ Bob merged to Alice. It implies that,
\begin{align}
    R_{\sigma(k)}^{\sigma} = R_l^{\sigma} &= \frac{S(\psi^{AB_{\sigma(1)}\dots B_{\sigma(k-1)}}) - S(\psi^{AB_{\sigma(1)}\dots B_{\sigma(k-1)}B_l})}{S(\phi^{B_l})} \nonumber \\
    &=\frac{S(\psi^{AY^\sigma_{k-1}}) - S(\psi^{AY^{\sigma}_{k-1}B_l})}{S(\phi^{B_l})},
\end{align}
where we used the notation $Y^\sigma_{k}=\{B_{\sigma(1)},\dots,B_{\sigma(k)}\}$.

Our next observation is that we can group the $N!$ extreme points in $(N-1)!$ sets of $N$ points. In the following, we denote by $c_{N-i}$ the permutations defined for $i\in\{0,\dots,N-1\}$ as
\begin{subequations} 
\begin{align}
    c_{N-i}(k) &= k\textrm{, }\forall k\in \{1,\dots,N-i-1\},\\
    c_{N-i}(N-i) &= N,\\
    c_{N-i}(k) &= k - 1\textrm{, }\forall k\in\{N-i+1,\dots,N\}.
\end{align}
\end{subequations}
Consider now a distribution $(R_1^{\sigma},\dots,R_N^{\sigma})$ with $\sigma(N)=N$, i.e., $B_N$ merged in $N^\textrm{th}$
position. We form a set by
grouping together the $N$ distributions $(R_1^{\sigma\circ c_{N-i}},\dots, R_N^{\sigma\circ c_{N-i}})$.
In term of merging order, the distribution $\sigma\circ c_{N-i}$ give rise to
the following ordering:
\begin{enumerate}
    \item For $k < N-i$, $B_{\sigma\circ c_{N-i}(k)}=B_{\sigma(k)}$ is merged in position $k$,
    \item For $k = N-i$, $B_{\sigma\circ c_{N-i}(N-i)}=B_{N}$ is merged in position $N-i$,
    \item For $N \geq k > N-i$, $B_{\sigma\circ c_{N-i}(k)}=B_{\sigma(k-1)}$ is merged in position $k$. 
\end{enumerate}
Distributions $\sigma\circ c_{N-i}$ are the distributions
obtained by merging Bobs $1$ to $N-1$ with the relative order given by $\sigma$. The only difference
is the merging position of $B_N$.

We can order this set by the value of the $N^\textrm{th}$ coordinate. Indeed, 
\begin{equation}
    R_N^\sigma \geq R_N^{\sigma\circ c_{N-1}} \geq R_N^{\sigma\circ c_{N-2}} \geq \dots \geq R_N^{\sigma\circ c_1}. \label{eq:hierarchy}
\end{equation}
Note that $\sigma \circ c_N = \sigma$. For a proof of Eq.~(\ref{eq:hierarchy}) in the general case see Appendix \ref{sec:ProofOrder}.
There are $(N-1)!$ distributions satisfying $\sigma(N)=N$. We have $(N-1)!$ ordered sets of size $N$. Observe that for all $\sigma\in S_N$ satisfying $\sigma(N)=N$,
\begin{equation}
    R_N^{\sigma} = \frac{S(\psi^{AB_1\dots B_{N-1}})}{S(\phi^{B_N})} \in \left\{ \frac{S(\psi^{AX})}{\sum_{B_{i}\notin X}S(\phi^{B_{i}})}\right\}
\end{equation}
As a consequence, 
\begin{equation}
R_N^{\sigma} \geq m^{\psi,\phi}.
\end{equation}
Two situations can happen for each of the $(N-1)!$ sets. The first case is that $R_N^{\sigma\circ c_1}\geq m^{\psi,\phi}$. In this case, we can obtain the distribution $(R_1^{\sigma\circ c_1},\dots,R_{N-1}^{\sigma\circ c_1},m^{\psi,\phi})$
from $(R_1^{\sigma\circ c_1},\dots,R_{N}^{\sigma\circ c_1})$ by simply reducing the entanglement between Alice and $B_N$.

The second case is that we can find $i$ such that $R_N^{\sigma\circ c_{N-i}} \geq m^{\psi,\phi} > R_N^{\sigma\circ c_{N-i-1}}$. 
In this case, we can consider a convex combination of $R^{\sigma\circ c_{N-i}}$ and $R^{\sigma\circ c_{N-i-1}}$, in order to arrive at
a resulting distribution $(R_1,\dots,R_N)$ such that $R_N = m^{\psi,\phi}$. We also know easily the value of most of
the two distribution's coordinates. Indeed, 
\begin{enumerate}
    \item For $k < N-i-1$, $c_{N-i-1}(k)=c_{N-i}(k)=k$, which gives
    \begin{subequations}
    \begin{align}
        R^{\sigma\circ c_{N-i-1}}_{\sigma\circ c_{N-i-1}(k)}=&R^{\sigma\circ c_{N-i-1}}_{\sigma(k)}=\frac{S(\psi^{AY^\sigma_{k-1}}) - S(\psi^{AY^\sigma_k})}{S(\phi^{B_{\sigma(k)}})},\\
        R^{\sigma\circ c_{N-i}}_{\sigma\circ c_{N-i}(k)}=&R^{\sigma\circ c_{N-i}}_{\sigma(k)}=\frac{S(\psi^{AY^\sigma_{k-1}}) - S(\psi^{AY^\sigma_k})}{S(\phi^{B_{\sigma(k)}})}.
    \end{align}
    \end{subequations}
    \item For $ k= N-i-1$, $c_{N-i-1}(N-i-1)=N$ and $c_{N-i}(N-i-1)=N-i-1$, 
    \begin{subequations}
    \begin{align}
        R^{\sigma\circ c_{N-i-1}}_{\sigma\circ c_{N-i-1}(N-i-1)}=&R^{\sigma\circ c_{N-i-1}}_{N}=\frac{S(\psi^{AY_{N-i-2}^\sigma}) - S(\psi^{AY_{N-i-2}^\sigma B_N})}{S(\phi^{B_N})},\\
        R^{\sigma\circ c_{N-i}}_{\sigma\circ c_{N-i}(N-i-1)}=&R^{\sigma\circ c_{N-i}}_{\sigma(N-i-1)}=\frac{S(\psi^{AY_{N-i-2}^\sigma}) - S(\psi^{AY_{N-i-1}^\sigma})}{S(\phi^{B_{\sigma(N-i-1)}})}.
    \end{align}
    \end{subequations}
    \item For $k=N-i$, $c_{N-i-1}(N-i)=N-i-1$ and $c_{N-i}(N-i)=N$, 
    \begin{subequations}
    \begin{align}
        R^{\sigma\circ c_{N-i-1}}_{\sigma\circ c_{N-i-1}(N-i)}=&R^{\sigma\circ c_{N-i-1}}_{\sigma(N-i-1)}=\frac{S(\psi^{AY_{N-i-2}^\sigma B_N}) - S(\psi^{AY_{N-i-1}^\sigma B_N})}{S(\phi^{B_{\sigma(N-i-1)}})},\\
        R^{\sigma\circ c_{N-i}}_{\sigma\circ c_{N-i}(N-i)}=&R^{\sigma\circ c_{N-i}}_{N}=\frac{S(\psi^{AY_{N-i-1}^\sigma}) - S(\psi^{AY_{N-i-1}^\sigma B_N})}{S(\phi^{B_N})}.
    \end{align}
    \end{subequations}
    \item For $k>N-i$, $c_{N-i-1}(k)=c_{N-i}(k)=k-1$,
    \begin{subequations}
    \begin{align}
        R^{\sigma\circ c_{N-i-1}}_{\sigma\circ c_{N-i-1}(k)}=&R^{\sigma\circ c_{N-i-1}}_{\sigma(k-1)}=\frac{S(\psi^{AY_{k-2}^\sigma B_N}) - S(\psi^{AY_{k-1}^\sigma B_N})}{S(\phi^{B_{\sigma(k-1)}})},\\
        R^{\sigma\circ c_{N-i}}_{\sigma\circ c_{N-i}(k)}=&R^{\sigma\circ c_{N-i}}_{\sigma(k-1)}=\frac{S(\psi^{AY_{k-2}^\sigma B_N}) - S(\psi^{AY_{k-1}^\sigma B_N})}{S(\phi^{B_{\sigma(k-1)}})}.
    \end{align}
    \end{subequations}
\end{enumerate}
Only two coordinates differ in the distributions given by $\sigma\circ c_{N-i}$ and $\sigma\circ c_{N-i-1}$.
As a consequence, the distribution resulting from their convex combination will be 
a distribution with $N^{\textrm{th}}$ coordinate taking
the value $m^{\psi,\phi}$, while the
$\sigma(N-i-1)^{\textrm{th}}$ one assumes the value
\begin{equation}
    \frac{S(\psi^{AY_{N-i-2}^\sigma})-m^{\psi,\phi}S(\phi^{B_N}) - S(\psi^{AY_{N-i-1}^\sigma B_N})}{S(\phi^{B_{\sigma(N-i-1)}})},
\end{equation} 
and $\forall k\in\{1,\dots,N-1\}\setminus\{N-i-1\}$, the $k^{\textrm{th}}$ coordinate take the 
value $R_{\sigma(k)}^{\sigma\circ c_{N-i}}$.

We will apply this procedure for each $\sigma\in \mathcal{S}_N$ with $\sigma(N)=N$.
We associate the resulting distributions with the $\sigma$ that gave rise to the distribution we used in the convex combination.
The result are $(N-1)!$ distributions $(R_1^{\sigma},\dots,R_{N-1}^\sigma,m^{\psi,\phi})$ one
for each permutation $\sigma$.
For the given quantum state $\psi$ 
equipped with the partitioning
in $A$ and $\{B_1,\dots,B_{N-1}\}$, we now
define the function 
\begin{equation}
        S_2^\psi: X\subset \{B_1,\dots,B_{N-1}\}
        \rightarrow \mathbb{R}^+_0
    \end{equation}
that depends on subsets
$X\subset \{B_1,\dots,B_{N-1}\}$, taking the values
\begin{equation}
S_2^\psi(X) := 
\left\{
\begin{array}{l}
S(\psi^{AX}) - m^{\psi,\phi}S(\phi^{B_N}), \\
\,\,\,\,\,\, \text{ if }  
\frac{S(\psi^{AX})-S(\psi^{AXB_N})}{S(\phi^{B_N})} < m(\psi, \phi),\\
S(\psi^{AXB_N}), \\
\,\,\,\,\,\,\text{ if }  
\frac{S(\psi^{AX})-S(\psi^{AXB_N})}{S(\phi^{B_N})} \geq m^{\psi,\phi}.
\end{array}
\right.
\end{equation}
We can rewrite the coordinates of $(R_1^{\sigma},\dots,R_{N-1}^\sigma, m^{\psi,\phi})$
as a function of $S_2^\psi$.
\begin{enumerate}
    \item For $k < N-i-1$, $R_N^{\sigma\circ c_k} < R_N^{\sigma\circ c_{k+1}} \leq R_N^{\sigma\circ c_{N-i-1}} < m^{\psi,\phi}$.
    As a consequence,
     \begin{align}
        R^{\sigma}_{\sigma(k)}&=\frac{S(\psi^{AY^\sigma_{k-1}}) - S(\psi^{AY^\sigma_{k}})}{S(\phi^{B_{\sigma(k)}})}\\
        &=\frac{S_2^\psi(Y^\sigma_{k-1}) - S_2^\psi(Y^\sigma_{k})}{S(\phi^{B_{\sigma(k)}})}.
    \end{align}
    
    \item For $k = N-i-1$, $R_N^{\sigma\circ c_{N-i-1}}<m^{\psi,\phi}\leq R_N^{\sigma\circ c_{N-i}}$,
      \begin{align}
        R^\sigma_{\sigma(N-i-1)}&=\frac{S(\psi^{AY^\sigma_{N-i-2}})-m^{\psi,\phi}S(\phi^{B_N}) - S(\psi^{AY^\sigma_{N-i-1}B_N})}{S(\phi^{B_{\sigma(N-i-1)}})}\\
        &=\frac{S_2^\psi(Y^\sigma_{k-1}) - S_2^\psi(Y^\sigma_{k})}{S(\phi^{B_{\sigma(k)}})}.
    \end{align}
    
    \item For $N > k > N-i-1$, $m^{\psi,\phi} \leq R_N^{\sigma\circ c_{N-i}} \leq R_N^{\sigma\circ c_{k}} \leq R_N^{\sigma\circ c_{k+1}}$,
     \begin{align}
        R^{\sigma}_{\sigma(k)}&=\frac{S(\psi^{AY^\sigma_{k-1}B_N}) - S(\psi^{AY^\sigma_{k}B_N})}{S(\phi^{B_{\sigma(k)}})}\\
        &=\frac{S_2^\psi(Y^\sigma_{k-1}) - S_2^\psi(Y^\sigma_{k})}{S(\phi^{B_{\sigma(k)}})}
        .
    \end{align}
    
\end{enumerate}
In summary, the have 
just presented first step of the procedure leaves us with $(N-1)!$ distributions $(R_1^\sigma,\ldots,R_{N-1}^\sigma,m^{\psi,\phi})$.\newline 

We introduce now generalized functions which will be used in the following steps. We define in a recursive way the functions $S_{j}^\psi$ for $j\in\{1,\dots, N\}$ by
\begin{subequations}
    \begin{equation}
        S_{j}^\psi: X\subset \{B_1,\dots,B_{N-j+1}\}
        \rightarrow \mathbb{R}^+_0,
\end{equation}
\begin{equation}
        S_{1}^\psi(X) := S(\psi^{AX}),
\end{equation}
\begin{equation}
S_{j+1}^\psi(X) := 
\left\{
\begin{array}{l}
S_{j}^\psi(X) - m_{j}^{\psi,\phi}S(\phi^{B_{N-j+1}}), \\
\,\,\,\,\,\, \text{ if }  
\frac{S_{j}^\psi(X)-S_{j}^\psi(XB_{N-j+1})}{S(\phi^{B_{N-j+1}})} < m_{j}^{\psi, \phi},\\
S_{j}^\psi(XB_{N-j+1}), \\
\,\,\,\,\,\,\text{ if }  
\frac{S_{j}^\psi(X)-S_{j}^\psi(XB_{N-j+1})}{S(\phi^{B_{N-j+1}})} \geq m^{\psi,\phi}_{j}. \label{def:sj_plus_2}
\end{array}
\right.
\end{equation}
\end{subequations}
Moreover, $m_j^{\psi,\phi}$ is given as follows,
\begin{equation}
    m_{j}^{\psi,\phi} := \min\left\{\frac{S_{j}^\psi(X)}{\sum_{B_i\notin X} S(\phi^{B_i})}, X\subset\{B_1,\dots,B_{N-j+1}\}\right\}.
\end{equation}
We show in Appendix (\ref{eq:strong_sub_annex}) that all the function $S_{j}$ satisfy strong subadditivity on the subsets of Bobs such that $\forall X\subset\{B_1,\dots,B_{N-j+1}\}$ and for $B_l,B_m \notin X$,
\begin{equation}
    S^\psi_{j}(XB_l) + 
    S_{j}^\psi(XB_m) 
    \geq S^\psi_{j}(XB_lB_m) + 
    S_{j}^\psi(X).\label{eq:strong_sub_j}
\end{equation}
Equipped with these tools, we are now ready to present the general $(j+1)^\mathrm{th}$ step of the procedure, where we will make extensive use of the properties of $R_i^\sigma$ and the generalized functions $S_j^\psi$ and $m_j^{\psi,\phi}$ discussed above. \newline

\textbf{$(j+1)^{\textrm{th}}$ step.}
In the $(j+1)^{\textrm{th}}$ step, there are $(N-j)!$ 
distributions denoted as
$(R_1^\sigma,\dots,R^\sigma_{N-j},m_j^{\psi,\phi}, m_{j-1}^{\psi,\phi},\dots, m^{\psi,\phi})$. One for each
$\sigma\in\mathcal{S}_N$ with $\forall k\in\{N-j+1,\dots,N\}$, $\sigma(k)=k$. For $k\in\{1,\dots,N-j\}$, the 
coordinate's values are given by
\begin{equation}
    R_{\sigma(k)}^\sigma = \frac{S_{j+1}(\psi^{AY_{k-1}^\sigma}) - S_{j+1}(\psi^{AY_{k}^\sigma})}{S(\phi^{B_{\sigma(k)}})},
\end{equation}

We will construct by convexity $(N-j-1)!$ distributions $(R_1,\dots,R_{N-j-1},m_{j+1}^{\psi,\phi},\dots,m^{\psi,\phi})$.
We proceed as before and group 
distributions in $(N-j-1)!$ sets of $N-j$ distributions.
We consider distributions associated with permutations $\sigma$ verifying $\sigma(N-j)=N-j$.
For $i\in\{0,\dots,N-j-1\}$, we define the permutations,
\begin{subequations}
\begin{align}
    c^{j+1}_{N-j-i}(k) &= k\textrm{, }\forall k\in \{1,\dots,N-j-i-1\},\\
    c^{j+1}_{N-j-i}(N-j-i) &= N-j,\\
    c^{j+1}_{N-j-i}(k) &= k - 1\textrm{, }\forall k\in\{N-j-i+1,\dots,N-j\},
\end{align}
\label{eq:c_Nj_def}
\end{subequations}
and we group the distributions 
$R^{\sigma\circ c^{j+1}_{N-j-i}}$.
For the sake of clarity, we drop the superscript of the $c$ permutations and we write
$N_j:=N-j$ for the rest of the proof.
We arrive at a 
hierarchy in the coordinates $N_j$, i.e. (see Appendix \ref{sec:ProofOrder}),
\begin{equation}
    R_{N_j}^{\sigma\circ c_{N_j}} \geq R_{N_j}^{\sigma\circ c_{N_j-1}}\geq \dots\geq R_{N_j}^{\sigma\circ c_{1}} \label{eq:R_Nj_hierarchy}
\end{equation}
with
\begin{equation}
R^{\sigma\circ c_{N_j}}_{N_j} \in \left\{\frac{S_{j+1}(X)}{\sum_{B_i\notin X}S(\phi^{B_i})}\right\}.
\end{equation}
As a consequence,
\begin{equation}
R^{\sigma\circ c_{N_j}}_{N_j} \geq m_{j+1}. 
\end{equation}
As in the first step, if $R^{\sigma\circ c_{1}}_{N_j} \geq m_{j+1}$, then we can take the distributions $R^{\sigma\circ c_1}$ and reduce entanglement
to achieve a distribution $(R^f_{1},\dots,R^{f}_{N_j-1},m_{j+1},\dots,m)$. Else, we can find an $i$ such that 
\begin{equation}
R^{\sigma\circ c_{N_j-i}}_{N_j} \geq m_{j+1} > R^{\sigma\circ c_{N_j-i-1}}_{N_j}.
\end{equation}
Again following the same ideas as in the first step, we take a convex combination of the two distributions $R^{\sigma\circ c_{N_j-i}}$ and $R^{\sigma\circ c_{N_j-i-1}}$. The values of all coordinates are given by
\begin{enumerate}
    \item For $k < N-i-j-1$, $c_{N_j-i-1}(k)=c_{N_j-i}(k)=k$, we obtain
    \begin{align}
        R^{\sigma\circ c_{N_j-i-1}}_{\sigma\circ c_{N_j-i-1}(k)}=&R^{\sigma\circ c_{N_j-i-1}}_{\sigma(k)}=\frac{S_{j+1}(\psi^{AY^\sigma_{k-1}}) - S_{j+1}(\psi^{AY^\sigma_k})}{S_{j+1}(\phi^{B_{\sigma(k)}})},\\
        R^{\sigma\circ c_{N_j-i}}_{\sigma\circ c_{N_j-i}(k)}=&R^{\sigma\circ c_{N_j-i}}_{\sigma(k)}=\frac{S_{j+1}(\psi^{AY^\sigma_{k-1}}) - S_{j+1}(\psi^{AY^\sigma_k})}{S(\phi^{B_{\sigma(k)}})}.
    \end{align}
    \item For $ k= N_j-i-1$, $c_{N_j-i-1}(N_j-i-1)=N$ and $c_{N_j-i}(N_j-i-1)=N_j-i-1$, we obtain
    \begin{align}
        R^{\sigma\circ c_{N_j-i-1}}_{\sigma\circ c_{N_j-i-1}(N_j-i-1)}=&R^{\sigma\circ c_{N_j-i-1}}_{N_j} \\
        =&\frac{S_{j+1}(\psi^{AY_{N_j-i-2}^\sigma}) - S_{j+1}(\psi^{AY_{N_j-i-2}^\sigma B_{N_j}})}{S(\phi^{B_{N_j}})}, \nonumber \\
        R^{\sigma\circ c_{N_j-i}}_{\sigma\circ c_{N_j-i}(N_j-i-1)}=&R^{\sigma\circ c_{N_j-i}}_{\sigma(N_j-i-1)} \\
        =&\frac{S_{j+1}(\psi^{AY_{N_j-i-2}^\sigma}) - S_{j+1}(\psi^{AY_{N_j-i-1}^\sigma})}{S(\phi^{B_{\sigma(N_j-i-1)}})}. \nonumber
    \end{align}
    
    \item For $k=N_j-i$, $c_{N_j-i-1}(N_j-i)=N_j-i-1$ and $c_{N_j-i}(N_j-i)=N_j$, we obtain
    \begin{align}
        R^{\sigma\circ c_{N_j-i-1}}_{\sigma\circ c_{N_j-i-1}(N_j-i)}=&R^{\sigma\circ c_{N_j-i-1}}_{\sigma(N_j-i-1)} \\
        =&\frac{S_{j+1}(\psi^{AY_{N_j-i-2}^\sigma B_{N_j}}) - S_{j+1}(\psi^{AY_{N_j-i-1}^\sigma B_{N_j}})}{S(\phi^{B_{\sigma(N_j-i-1)}})}, \nonumber \\
        R^{\sigma\circ c_{N_j-i}}_{\sigma\circ c_{N_j-i}(N_j-i)}=&R^{\sigma\circ c_{N_j-i}}_{N_j} \\
        =&\frac{S_{j+1}(\psi^{AY_{N_j-i-1}^\sigma}) - S_{j+1}(\psi^{AY_{N_j-i-1}^\sigma B_{N_j}})}{S(\phi^{B_{N_j}})}. \nonumber
    \end{align}
    
    \item For $k>N_j-i$, $c_{N_j-i-1}(k)=c_{N_j-i}(k)=k-1$, we obtain
    \begin{align}
        R^{\sigma\circ c_{N_j-i-1}}_{\sigma\circ c_{N_j-i-1}(k)}=&R^{\sigma\circ c_{N_j-i-1}}_{\sigma(k-1)}=\frac{S_{j+1}(\psi^{AY_{k-2}^\sigma B_{N_j}}) - S_{j+1}(\psi^{AY_{k-1}^\sigma B_{N_j}})}{S(\phi^{B_{\sigma(k-1)}})},\\
        R^{\sigma\circ c_{N_j-i}}_{\sigma\circ c_{N_j-i}(k)}=&R^{\sigma\circ c_{N_j-i}}_{\sigma(k-1)}=\frac{S_{j+1}(\psi^{AY_{k-2}^\sigma B_{N_j}}) - S_{j+1}(\psi^{AY_{k-1}^\sigma B_{N_j}})}{S(\phi^{B_{\sigma(k-1)}})}.
    \end{align}
\end{enumerate}
Again, only two coordinates differ between the distributions given by $\sigma\circ c_{N_j-i}$ and $\sigma\circ c_{N_j-i-1}$.
As a consequence, the distribution resulting from their convex combination will be
a distribution with a $N_j^{\textrm{th}}$ coordinate of value $m_{j+1}^{\psi,\phi}$, a $\sigma(N_j-i-1)^{\textrm{th}}$ coordinate of value
\begin{equation}
    \frac{S_{j+1}^\psi(Y_{N_j-i-2}^\sigma)-m_{j+1}^{\psi,\phi}S(\phi^{B_{N_j}}) - S_{j+1}^\psi(Y_{N_j-i-1}^\sigma B_{N_j})}{S(\phi^{B_{\sigma(N_j-i-1)}})},
\end{equation} 
and $\forall k\in\{1,\dots,N_j-1\}\setminus\{N_j-i-1\}$, a $k^{\textrm{th}}$ coordinate of value $R_{\sigma(k)}^{\sigma\circ c_{N_j-i}}$.
As in the first step, from each permutation $\sigma\in \mathcal{S}_N$ with $\forall k\in\{N-j,\dots,N\}, \sigma(k)=k$ we have a resulting distribution $(R_1^\sigma,\dots,R_{N_j-1}^\sigma, m_{j+1}^{\psi,\phi}, \dots,  m^{\psi,\phi})$ that we label with $\sigma$. All the coordinate $R_{\sigma(k)}^\sigma$ can be rewritten in term of $S_{j+2}^\psi$ such that
\begin{equation}
    R_{\sigma(k)}^\sigma = \frac{S_{j+2}^\psi(Y_{k-1}^\sigma) - S_{j+2}^\psi(Y_{k}^\sigma)}{S(\phi^{B_{\sigma(k)}})}.
\end{equation}

Following this procedure until step $N$, we find ourselves with the distribution $(m_N^{\psi,\phi},m_{N-1}^{\psi,\phi},\dots,m_2^{\psi,\phi},m^{\psi,\phi})$. It remains to be proven that $\forall j\in\{1,\dots,N-1\}$, $m_{j+1}^{\psi,\phi}\geq m_{j}^{\psi,\phi}$.
Taking an element of the set from which $m_{j+1}^{\psi,\phi}$ is the minimum: ${S^\psi_{j+1}(X)}/({\sum_{B_i\notin X} S(\phi^{B_i})})$, 
where $X$ is a subset of $\{B_1,\dots,B_{N_j}\}$, 
we will show it is greater or equal to every elements of the set from which $m_j^{\psi,\phi}$ is the minimum,
\begin{equation}
    M_j^{\psi,\phi}:= \left\{\frac{S_{j}^\psi(Y)}{\sum_{B_i\notin Y} S(\phi^{B_i})}, Y\subset\{B_1,\dots,B_{N_j+1}\}\right\}.
\end{equation}
There are two cases:
\begin{enumerate}
    \item If $S^\psi_{j+1}(X) = S^\psi_{j}(XB_{N_j+1})$, then
    \begin{equation}
        \frac{S^\psi_{j+1}(X)}{\sum_{B_i\notin X} S(\phi^{B_i})} = \frac{S^\psi_{j}(XB_{N_j+1})}{\sum_{B_i\notin Y} S(\phi^{B_i})} \in M_j^{\psi,\phi}.
    \end{equation}
    As a consequence,
      \begin{equation}
    \frac{S^\psi_{j+1}(X)}{\sum_{B_i\notin Y} S(\phi^{B_i})} \geq m_j^{\psi,\phi}.
        \end{equation}
    \item If $S^\psi_{j+1}(X) = S^\psi_{j}(X) - m_j^{\psi,\phi}S(\phi^{B_{N_j+1}})$, we know that
    \begin{equation}
        \frac{S^\psi_j(X)}{\sum_{B_i\notin X} S(\phi^{B_i}) + S(\phi^{B_{N_j+1}})}\geq m_j^{\psi,\phi}.
    \end{equation}
    It implies directly that
    \begin{equation}
        \frac{S^\psi_{j}(X) - m_j^{\psi,\phi}S(\phi^{B_{N_j+1}})}{\sum_{B_i\notin X} S(\phi^{B_i})}\geq m_j^{\psi,\phi}.
    \end{equation}
\end{enumerate}
Thus, recalling that via LOCC it is always possible to reduce bipartite entanglement between Alice and the Bobs, we can finally achieve the distribution $(m^{\psi,\phi},\dots, m^{\psi,\phi})$, and the proof of Theorem~\ref{thm:bound2} is complete.}

{

\section{\label{sec:ProofOrder}Proof of Eqs.~(\ref{eq:hierarchy}) and (\ref{eq:R_Nj_hierarchy})}

To prove Eq.~(\ref{eq:R_Nj_hierarchy}) we will show that $\forall i\in\{0,\dots,N_j-2\}$, 
$R_{N_j}^{\sigma\circ c_{N_j-i}} \geq R_N^{\sigma\circ c_{N_j-i-1}}$.
First, we need to remark that according to definition (\ref{eq:c_Nj_def}),
\begin{align*}
    Y^{\sigma\circ c_{N_j-i}}_{N_j-i-1} &= \{B_{\sigma\circ c_{N_j-i}(1)},\dots, B_{\sigma\circ c_{N_j-i}(N_j-i-1)}\}\\
                                        &= \{B_{\sigma(1)},\dots, B_{\sigma(N_j-i-1)}\}\\
                                        &= Y^{\sigma}_{N_j-i-1}.
\end{align*}
Then rewriting explicitly the coordinates $R^{\sigma\circ c_{N_j-i}}_{N_j}$ and $R^{\sigma\circ c_{N_j-i-1}}_{N_j}$ we obtain
\begin{align}
    R_{N_j}^{\sigma\circ c_{N_j-i}} &= \frac{S_{j+1}^{\psi}(Y^{\sigma\circ c_{N_j-i}}_{N_j-i-1}) - S_{j+1}^{\psi}(Y^{\sigma\circ c_{N_j-i}}_{N_j-i-1}B_{N_j})}{S(\phi^{B_{N_j}})}\\
    &=\frac{S_{j+1}^{\psi}(Y^{\sigma}_{N_j-i-1}) - S_{j+1}^{\psi}(Y^{\sigma}_{N_j-i-1}B_{N_j})}{S(\phi^{B_{N_j}})} \nonumber \\
       &=\frac{S_{j+1}^{\psi}(Y^{\sigma}_{N_j-i-2}B_{\sigma(N_j-i-1)}) - S_{j+1}^{\psi}(Y^{\sigma}_{N_j-i-2}B_{\sigma(N_j-i-1)}B_{N_j})}{S(\phi^{B_{N_j}})}, \nonumber
       \end{align}
and
\begin{align}
    R_{N_j}^{\sigma\circ c_{N_j-i-1}} &= \frac{S_{j+1}^{\psi}(Y^{\sigma\circ c_{N_j-i-1}}_{N_j-i-2}) - S_{j+1}^{\psi}(Y^{\sigma\circ c_{N_j-i-1}}_{N_j-i-2}B_{N_j})}{S(\phi^{B_{N_j}})}\\
    &=\frac{S_{j+1}^{\psi}(Y^{\sigma}_{N_j-i-2}) - S_{j+1}^{\psi}(Y^{\sigma}_{N_j-i-2}B_{N_j})}{S(\phi^{B_{N_j}})}. \nonumber
\end{align}
The ``strong subadditivity'' of Eq.~(\ref{eq:strong_sub_j}) ensures that for all subsets $Y$,
\begin{multline}
    S_{j+1}^{\psi}(YB_{\sigma(N_j-i-1)}) + S_{j+1}^{\psi}(YB_{N_j}) \geq\\ S_{j+1}^{\psi}(YB_{\sigma(N_j-i-1)}B_{N_j})+ S_{j+1}^{\psi}(Y).\label{eq:strong_sub_annex}
\end{multline}
Eq.~(\ref{eq:R_Nj_hierarchy}) follows directly from it, 
since Eq.~(\ref{eq:strong_sub_annex}) implies that
\begin{multline}
    S_{j+1}^{\psi}(YB_{\sigma(N_j-i-1)}) - S_{j+1}^{\psi}(YB_{\sigma(N_j-i-1)}B_{N_j})  \geq\\ S_{j+1}^{\psi}(Y) - S_{j+1}^{\psi}(YB_{N_j}).
\end{multline}
It follows that $R_{N_j}^{\sigma\circ c_{N_j-i}} \geq R_N^{\sigma\circ c_{N_j-i-1}}$. Eqs.~(\ref{eq:hierarchy}) 
are proven in the same manner.
}
\medskip

\section{Proof of Eq.~(\ref{eq:strong_sub_j})} \label{proof_of_strong_sub}
Given that $S_j$ satisfy strong subadditivity, we will show that $\forall X\subset\{B_1,\dots,B_{N-j}\}$ and for $B_l,B_m \notin X$,
\begin{equation}
    S_{j+1}^{\psi}(XB_l) + S_{j+1}^{\psi}(XB_m) - S_{j+1}^\psi(X) - S_{j+1}^\psi(XB_lB_m) \geq 0, \label{eq:ineq_sj_plus_2}
\end{equation}
with $S_{j+1}^\psi$ defined as in Eq.~(\ref{def:sj_plus_2}).

For a given $X\subset\{B_1,\dots,B_{N-j}\}$ and given $B_l,B_m \notin X$, each term of the inequality (\ref{eq:ineq_sj_plus_2}) can be rewritten using $S_{j}^\psi$. For all $Y\subset\{B_1,\dots,B_{N-j}\}$, the value of $S_{j+1}^\psi(Y)$ depends on the value of $S_{j}^\psi(Y) - S_{j}^\psi(YB_{N_j+1})$. As a consequence, several cases arise depending on the value of the four following values,
\begin{subequations}
    \begin{align}
        A&:=\frac{S_{j}^\psi(X)-S_{j}^\psi(XB_{N_j+1})}{S(\phi^{B_{N_j+1}})},\\
        B&:=\frac{S_{j}^\psi(XB_l)-S_{j}^\psi(XB_lB_{N_j+1})}{S(\phi^{B_{N_j+1}})},\\
        C&:=\frac{S_{j}^\psi(XB_m)-S_{j}^\psi(XB_mB_{N_j+1})}{S(\phi^{B_{N_j+1}})},\\
        D&:=\frac{S_{j}^\psi(XB_lB_m)-S_{j}^\psi(XB_lB_mB_{N_j+1})}{S(\phi^{B_{N_j+1}})}.
    \end{align}
\end{subequations}
From Eq.~(\ref{eq:strong_sub_j}), we can deduce $A\leq B$, $A\leq C$, $B\leq D$ and $C \leq D$. We can assume without loss of generality that $B \leq C$. Thus,
\begin{equation}
    A\leq B\leq C\leq D
\end{equation}
and there is only five cases to examine $m_{j}^{\psi,\phi} < A$, $A \leq m_{j}^{\psi,\phi} < B$, $ B \leq m_{j}^{\psi,\phi} < C$, $C \leq m_{j}^{\psi,\phi} < D$ and $D\leq m_{j}^{\psi,\phi}$. We will prove inequality (\ref{eq:ineq_sj_plus_2}) for each of these case.
\begin{enumerate}
    \item $m_{j}^{\psi,\phi} < A$.\newline
    We can rewrite the left-hand side of inequality (\ref{eq:ineq_sj_plus_2}) as
    \begin{multline*}
        S_{j+1}^{\psi}(XB_l) + S_{j+1}^{\psi}(XB_m) - S_{j+1}^\psi(X) - S_{j+1}^\psi(XB_lB_m) =\\ S_{j}^{\psi}(XB_lB_{N_j+1}) + S_{j}^{\psi}(XB_mB_{N_j+1}) - S_{j}^\psi(XB_{N_j+1}) - S_{j}^\psi(XB_lB_mB_{N_j+1}).
    \end{multline*}
    According to Eq.~(\ref{eq:strong_sub_j}),
    \begin{multline*}
        S_{j}^{\psi}(XB_lB_{N_j+1}) + S_{j}^{\psi}(XB_mB_{N_j+1}) -\\ S_{j}^\psi(XB_{N_j+1}) - S_{j}^\psi(XB_lB_mB_{N_j+1}) \geq 0.
    \end{multline*}
    So the inequality is verified.
    \item $A \leq m_{j}^{\psi,\phi} < B$.\newline
    We can rewrite the left-hand side of inequality (\ref{eq:ineq_sj_plus_2}) as
        \begin{multline*}
        S_{j+1}^{\psi}(XB_l) + S_{j+1}^{\psi}(XB_m) - S_{j+1}^\psi(X) - S_{j+1}^\psi(XB_lB_m) =\\ S_{j}^{\psi}(XB_lB_{N_j+1}) + S_{j}^{\psi}(XB_mB_{N_j+1}) - S_{j}^\psi(X) +\\ m_{j}^{\psi,\phi}S(\phi^{B_{N_j+1}})- S_{j}^\psi(XB_lB_mB_{N_j+1}).
    \end{multline*}
    According to Eq.~(\ref{eq:strong_sub_j}), we know that the last equation's right side is larger than
    \begin{equation*}
        S_{j}^\psi(XB_{N_j+1}) - S_{j}^\psi(X) + m_{j}^{\psi,\phi}S(\phi^{B_{N_j+1}}).
    \end{equation*}
    The latter quantity is non-negative because $m_{j}^{\psi,\phi} \geq A$, showing the validity of the inequality.
    \item $B \leq m_{j}^{\psi,\phi} < C$.\newline
     Once again, we rewrite the left-hand side of the inequality (\ref{eq:ineq_sj_plus_2}):
        \begin{multline*}
        S_{j+1}^{\psi}(XB_l) + S_{j+1}^{\psi}(XB_m) - S_{j+1}^\psi(X) - S_{j+1}^\psi(XB_lB_m) =\\ S_{j}^{\psi}(XB_l) + S_{j}^{\psi}(XB_mB_{N_j+1}) - S_{j}^\psi(X) - S_{j}^\psi(XB_lB_mB_{N_j+1}).
        \end{multline*}
        The ``strong subbaditivity'' of the function $S_{j}^\psi$ gives rise to 
        \begin{multline*}
            S_{j}^{\psi}(XB_mB_{N_j+1}) -S_{j}^\psi(XB_lB_mB_{N_j+1})\geq \\
            S_{j}^\psi(XB_{N_j+1}) - S_{j}^\psi(XB_lB_{N_j+1}),
        \end{multline*}
        and this implies that 
        \begin{multline*}
            S_{j}^{\psi}(XB_l) + S_{j}^{\psi}(XB_mB_{N_j+1}) - S_{j}^\psi(X) - S_{j}^\psi(XB_lB_mB_{N_j+1}) \geq \\
             S_{j}^{\psi}(XB_l) + S_{j}^{\psi}(XB_{N_j+1}) - S_{j}^\psi(X) - S_{j}^\psi(XB_lB_{N_j+1}).
        \end{multline*}
        Again, the ``strong subbaditivity'' of $S_{j}^\psi$ allow us to conclude that the right-hand side is positive. Thus, inequality (\ref{eq:ineq_sj_plus_2}) is verified.
        \item $C \leq m_{j}^{\psi,\phi} < D$.\newline
        In this case, the rewriting gives,
        \begin{multline*}
            S_{j+1}^{\psi}(XB_l) + S_{j+1}^{\psi}(XB_m) - S_{j+1}^\psi(X) - S_{j+1}^\psi(XB_lB_m) =\\ S_{j}^{\psi}(XB_l) + S_{j}^{\psi}(XB_m) - m_{j}^{\psi,\phi}S(\phi^{B_{N_j+1}}) -\\ S_{j}^\psi(X) - S_{j}^\psi(XB_lB_mB_{N_j+1}).
        \end{multline*}
        $D$ being superior to $m_{j}^{\psi,\phi}$ implies directly that
        \begin{equation*}
            - S_{j}^\psi(XB_lB_mB_{N_j+1}) - m_{j}^{\psi,\phi}S(\phi^{B_{N_j+1}}) > -S_{j}^\psi(XB_lB_m).
        \end{equation*}
        We can lower bound the right-hand side by
        \begin{equation*}
            S_{j}^{\psi}(XB_l) + S_{j}^{\psi}(XB_m) - S_{j}^\psi(X) - S_{j}^\psi(XB_lB_m).
        \end{equation*}
        Once again, the ``strong subbaditivity'' of $S_{j}^{\psi}$ allows to conclude that the inequality (\ref{eq:ineq_sj_plus_2}) is true.
        \item $D < m_{j}^{\psi,\phi}$. \newline
        The last case is straightforward since the rewriting in term of $S_{j}^\psi$ is
        \begin{multline*}
            S_{j+1}^{\psi}(XB_l) + S_{j+1}^{\psi}(XB_m) - S_{j+1}^\psi(X) - S_{j+1}^\psi(XB_lB_m) =\\ S_{j}^{\psi}(XB_l) + S_{j}^{\psi}(XB_m) - S_{j}^\psi(X) - S_{j}^\psi(XB_lB_m).
        \end{multline*}
        In this case, the ``strong subbaditivity'' of Eq.~(\ref{eq:strong_sub_j}) leads us directly to the conclusion that the inequality~(\ref{eq:ineq_sj_plus_2}) is true.
\end{enumerate}
In conclusion, the inequality (\ref{eq:ineq_sj_plus_2}) is verified for each possible case. Thus Eq.~(\ref{eq:strong_sub_j}) is verified by induction.\\

\section{\label{sec:GHZ}Multi-partite state creation from GHZ states}

In this section, we will show that any $N$-partite mixed state $\sigma=\sigma^{ABC\ldots Z}$
can be obtained from the GHZ state vector $\ket{\mathrm{GHZ}}=(\ket{0}^{\otimes N}+\ket{1}^{\otimes N})/\sqrt{2}$
via asymptotic $N$-partite LOCC at a rate bounded below as 
\begin{equation}
\begin{split}R(\ket{\mathrm{GHZ}}\bra{\mathrm{GHZ}}\rightarrow & \sigma)\geq\\
 & \frac{1}{E_{\mathrm{c}}^{A|BC\ldots Z}(\sigma)+S(\sigma^{C})+\cdots+S(\sigma^{Z})},\label{eq:RMulti-partite}
\end{split}
\end{equation}
where $E_{\mathrm{c}}^{A|BC\ldots Z}$ denotes the entanglement cost
between Alice and the remaining $N-1$ parties. For proving this statement,
we first apply entanglement combing to the $N$-partite GHZ state,
i.e., the asymptotic transformation 
\begin{equation}
\frac{1}{\sqrt{2}}(\ket{0}^{\otimes N}+\ket{1}^{\otimes N})\rightarrow\mu_{1}^{A_{1}B}\otimes\mu_{2}^{A_{2}C}\otimes\mu_{3}^{A_{3}D}\otimes\cdots
\end{equation}
with $N$ pure states $\mu_{i}$. A necessary and sufficient condition
for this transformation is that 
\begin{equation}
\sum_{i}E(\mu_{i})\leq1,\label{eq:EMulti-partite}
\end{equation}
as can be seen by applying multi-partite assisted entanglement distillation~\cite{Smolin2005,Horodecki2005,Horodecki2007}
and time-sharing. The combing is now performed in such a way that
the following equalities hold for some parameter $r\geq0$:\begin{subequations}\label{eq:mu}
\begin{align}
E(\mu_{1}^{A_{1}B}) & =rE_{\mathrm{c}}^{A|BC\ldots Z}(\sigma^{ABC\ldots Z}),\\
E(\mu_{2}^{A_{2}C}) & =rS(\sigma^{C}),\\
 & \vdots\\
E(\mu_{N-1}^{A_{N-1}Z}) & =rS(\sigma^{Z}).
\end{align}
\end{subequations} The parameter $r$ will be determined below.

After combing, Alice and Bob use their state $\mu_{1}^{A_{1}B}$ for
creating the desired final state $\sigma$ via bipartite LOCC. The
optimal rate for this procedure is $E(\mu_{1}^{A_{1}B})/E_{\mathrm{c}}^{A|BC\ldots Z}(\sigma)$,
which is equal to our parameter $r$ due to Eqs.~(\ref{eq:mu}).
In the next step, Bob applies Schumacher compression to those subsystems
of $\sigma$ which are in his possession. The overall compression
rate per copy of the initial state vector $\ket{\mathrm{GHZ}}$ is
given as $r\cdot S(\sigma^{X})$, where $X$ is the corresponding
subsystem. In a final step, Bob teleports compressed parts of the
state $\sigma$ to the other parties \cite{BennettTeleportation,TeleportationReview}.
Because of Eqs.~(\ref{eq:mu}), the parties share exactly the right
amount of entanglement for this procedure. The overall process achieves
the transformation $\ket{\mathrm{GHZ}}\bra{\mathrm{GHZ}}\rightarrow\sigma$
at rate $r$.
Finally, by inserting Eqs.~(\ref{eq:mu}) in Eq.~(\ref{eq:EMulti-partite}),
we see that the parameter $r$ can take any value compatible with
the inequality 
\begin{equation}
r\leq\frac{1}{E_{\mathrm{c}}^{A|BC\ldots Z}(\sigma)+S(\sigma^{C})+\cdots+S(\sigma^{Z})},
\end{equation}
which completes the proof of Eq.~(\ref{eq:RMulti-partite}). 
\end{document}